\def\swap{\textit{swap}}
\def\polylog{\mathrm{polylog}}
\def\IN{{\mathbb N}}
\renewcommand{\epsilon}{\varepsilon}
\newcommand{\PNF}{\mathrm{PNF}}
\newcommand{\LPN}{{\mathcal L}_{\textrm{PN}}}
\newtheorem{observation}{Observation}
\begin{document}

\title{On Combinatorial Generation of Prefix Normal Words}

\author{P\'eter Burcsi\inst{1} \and Gabriele Fici\inst{2} \and Zsuzsanna Lipt\'ak\inst{3} \and Frank Ruskey\inst{4} \and Joe Sawada\inst{5}}

\institute{Dept.\ of Computer Algebra,
E\"otv\"os Lor\'and Univ., Budapest, Hungary, 
\email{bupe@compalg.inf.elte.hu}
\and Dip.\ di Matematica e Informatica, University of Palermo, Italy, 
\email{gabriele.fici@math.unipa.it} 
\and Dip.\ di Informatica, University of Verona, Italy, 
\email{zsuzsanna.liptak@univr.it}
\and Dept.\ of Computer Science, University of Victoria, Canada, 
\email{ruskey@cs.uvic.ca}
\and School of Computer Science, University of Guelph, Canada, 
\email{jsawada@uoguelph.ca}
}

\date{}

\maketitle

\begin{abstract}
A prefix normal word is a binary word with the property that no substring has more $1$s than the prefix of the same length. This class of words is important in the context of binary jumbled string matching. In this paper we present an efficient algorithm for exhaustively listing the prefix normal words with a fixed length. The algorithm is based on the fact that the language of prefix normal words is a bubble language, a class of binary languages with the property that, for any word $w$ in the language, exchanging the first occurrence of $01$ by $10$ in $w$ results in another word in the language. We prove that each prefix normal word is produced in $O(n)$ amortized time, and conjecture, based on experimental evidence, that the true amortized running time is $O(\polylog(n))$. 
\end{abstract}

\section{Introduction}

A binary word of length $n$ is {\em prefix normal} if for all $1 \leq k \leq n$, no substring of length $k$ has more $1$s than the prefix of length $k$. For example, $1001010$ is not prefix normal because the substring $101$ has more $1$s than the prefix $100$. These words were introduced in~\cite{FL11}, where it was shown that each binary word $w$ has a canonical {\em prefix normal form} $w'$ of the same length: $w$ and $w'$ are equivalent in a certain sense. 

The study of prefix normal words, and prefix normal forms, is motivated by the string problem known as {\em binary jumbled pattern matching}. In that problem, we are given a text of length $n$ over a binary alphabet, and two numbers $x$ and $y$, and ask whether the text has a substring with exactly $x$ $1$s and $y$ $0$s. While the online version can be solved with a simple sliding window algorithm in $O(n)$ time, the offline version, where many queries are expected, has recently attracted much interest: here an index of size $O(n)$ can be generated which then allows answering queries in constant time~\cite{CFL09}. However, the best construction algorithms up to date have running time O$(n^2/\log n)$~\cite{BCFL10,MR10}. Several recent papers have yielded better results under specific assumptions, such as word level parallelism or highly compressible strings~\cite{MR12,BFKL13,GG13,CGGLLRT13}, or for constructing an approximate index~\cite{CLWY12}; but the general case has not been improved. It was demonstrated in~\cite{FL11,BFKL13} that prefix normal forms of the text can be used to construct this index. See the Appendix for a brief explanation of this connection.

{\em Jumbled Pattern Matching} (JPM), over an arbitrary alphabet, is a variant of approximate pattern matching: We are given a text and a pattern, and want to answer the question whether the text has a substring which is a permutation of the pattern (existence), or find one or all occurrences of such substrings (occurrence, listing).%
\footnote{Formally: the Parikh vector of a string $s$ over a finite ordered alphabet $\Sigma = \{a_1,\ldots, a_{\sigma}\}$ is the vector $p(s) = (p_1,\ldots, p_{\sigma})$ s.t.\ for all $1\leq i \leq \sigma$, $p_i$ is the number of occurrences of $a_i$ in $s$. Given the text $s$ and pattern $t$, we want to find (occurrences of) substrings $t'$ of $s$ s.t.\ $p(t') = p(t)$.}
This problem has also been studied under the terms {\em Abelian pattern matching, Parikh vector matching, and permutation matching}. A closely related problem is that of Parikh fingerprints~\cite{AALS03}.
Applications in computational biology include SNP discovery, alignment, gene clusters, pattern discovery, and mass spectrometry data interpretation~\cite{Boecker07,Benson03,BoeckerJMS08,DuhrkopLMB13,Parida06}. 

For one query, the JPM problem can be solved in optimal linear time with a classical sliding window approach~\cite{BEL04}, while recently, interest has turned towards the indexing problem~\cite{CFL09,KRR13}. Moreover, several variants of the original problem have recently been introduced: approximate JPM \cite{BCFL12_TOCS}, JPM in the streaming model \cite{LLZ12}, JPM on trees and graphs \cite{GHLW13,CGGLLRT13}.

{\em Bubble languages} are an interesting new class of binary languages defined by the following property: ${\cal L}$ is a bubble language if, for every word $w \in {\cal L}$, replacing the first occurrence of $01$ (if any)  by $10$ results in another word in ${\cal L}$~\cite{RSW12,Ruskey12,SW12}. A generic generating algorithm for bubble languages was given in~\cite{SW12}, leading to Gray codes for each of these languages, while the algorithm's efficiency depends only on a language-dependent subroutine. In the best case, this leads to CAT (constant amortized time) generating algorithms. Many important languages are bubble languages, among them necklaces, binary Lyndon words, and $k$-ary Dyck words. %


In this paper, we show that prefix normal words form a bubble language and present an efficient generating algorithm which runs in $O(n)$ amortized time per word, and which yields a Gray code for prefix normal words. The best previous generating algorithm for prefix normal words ran in $O(2^n \cdot n^2)$ time, and consisted simply in testing each binary word for the prefix normal property (unpublished). Based on experimental evidence, we conjecture that the running time of our algorithm is in fact  $\Theta(\polylog(n))$ amortized. We also give a new characterization of bubble languages in terms of a closure property in the computation tree of a certain generating algorithm for binary words. 
We prove new properties of prefix normal words and present a linear time testing algorithm for words which have been obtained from prefix normal words via a simple operation. This could lead to a better understanding of prefix normal words and in the long run, to faster  computation of prefix normal forms, and thus contribute to the goal of strongly subquadratic algorithms for binary jumbled pattern matching.


\section{Basics}

A {\em binary word} (or {\em string}) $w=w_1\cdots w_n$ over $\Sigma=\{0,1\}$ is a finite sequence of elements from $\Sigma$. Its length $n$ is denoted by $|w|$. For any $1\leq i\leq |w|$, the $i$-th symbol of a word $w$ is denoted by $w_{i}$. 
We denote by $\Sigma^n$ the words over $\Sigma$ of length $n$, and by $\Sigma^{*} = \cup_{n\geq 0} \Sigma^n$ the set of finite words  over $\Sigma$. The empty word is denoted by $\epsilon$. 
Let $w\in \Sigma^{*}$. If $w=uv$ for some $u,v\in\Sigma^{*}$, we say that $u$ is a \emph{prefix} of $w$ and $v$ is a \emph{suffix} of $w$. A \emph{substring} of $w$ is a prefix of a suffix of $w$. A {\em binary language} is any subset $\cal L$ of $\Sigma^*$.

In the following, we will often write binary words $w\neq 1^n$ in a canonical form $w=1^s0^t\gamma$, where $\gamma\in 1\{0,1\}^* \cup \{\epsilon\}$ and $s\geq 0, t\geq1$. In other words, $s$ is the length of the first, possibly empty, $1$-run of $w$, $t$ the length of the first $0$-run, and $\gamma$ the remaining, possibly empty, suffix. Note that this representation is unique. We call $1^s0^t$ the {\em critical prefix} of $w$ and $cr(w)=s+t$ the {\em critical prefix length} of $w$. We denote by $|w|_c$ the number of occurrences in $w$ of character $c\in\{0,1\}$, and by ${\cal B}^n_d$ the set of all binary strings $w$ of length $n$ such that $|w|_1 = d$ (the  {\em density} of $w$ is $d$). 

Given a string $w$, we can obtain another string $w' = \swap(w,i,j)$, the string obtained from $w$ by exchanging the characters in positions $i$ and $j$. 

\subsection{Prefix Normal Words}\label{sec:pnw}

Let $w\in \Sigma^*$. For $i=0,\ldots,n$, we set 
\begin{itemize}
\item $P(w,i) = |w_1\cdots w_i|_1$, the number of $1$s in the $i$-length prefix of $w.$
\item $F(w,i) = \max \{|u|_1 : u \text{ is a substring of } w \text{ and } |u|=i\}$, the maximum number of $1$s over all substrings of length $i$. 
\end{itemize}

\begin{definition} 
A binary word $w$ is {\em prefix normal} if, for all $1\leq i \leq |w|$, $F(w,i) = P(w,i)$. In other words, a word is prefix normal if no substring contains more $1$s than the prefix of the same length.
\end{definition}

For example, $1001010$ is not prefix normal because the substring $101$ has more $1$s than the prefix $100$. We denote by $\LPN$ the language of prefix normal words. 
In~\cite{FL11} it was shown that for every word $w$ there exists a unique word $w'$, called its {\em prefix normal form}, or $\PNF(w)$, such that for all $1\leq i \leq |w|$, $F(w,i) = F(w',i)$, and $w'$ is prefix normal. Therefore, a prefix normal word is a word coinciding with its prefix normal form.

{\em Note:} In~\cite{FL11}, the property `prefix normal' was defined both with respect to $1$ and with respect to $0$. Here we restrict ourselves to prefix normal words w.r.t.\ $1$.

\begin{table}[ht]
\centering \begin{small}
\begin{raggedright}

\begin{tabular}{*{2}l @{\hspace{6mm}}||@{\hspace{6mm}} *{2}l}
$\LPN \cap \Sigma^5$ \quad  & words with this PNF  & $\LPN \cap \Sigma^5$  \quad  & words with this PNF\\
\hline &&&\rule[-2pt]{0pt}{3pt}\\
$11111$ & \{$11111$\} & $11000$ & \{$11000,011000,00110,00011$\}\\
$11110$ & \{$11110$, $011111$\}& $10101$ & $\{10101\}$\\
$11101$ & \{$11101$, $10111$\}& $10100$ & $\{10100, 01010, 00101\}$\\
$11100$ & \{$11100$, $01110$, $00111$\}& $10010$ & $\{10010, 01001\}$\\
$11011$ & \{$11011$\}& $10001$ & $\{10001\}$\\
$11010$ & \{$11010, 10110, 01101,01011$\}& $10000$ & $\{10000, 01000, 00100, 00010, 00001\}$\\
$11001$ & \{$11001,10011$\}& $00000$ & $\{00000\}$ \\
&&&\\
\hline \vspace{4mm}
\end{tabular}
\end{raggedright}\caption{All prefix normal words of length 5 and their equivalence classes.\label{table:classes5}}
\end{small}
\end{table}

In Table~\ref{table:classes5} we list all prefix normal words of length $5$, and, for each $w'\in \LPN \cap \Sigma^5$, the set of binary words $w$ such that $\PNF(w) = w'$ (i.e., its equivalence class).
Several methods were presented in~\cite{FL11} for testing whether a word is prefix normal; however, all ran in quadratic time. One open problem given there was that of enumerating prefix normal words (counting). The number of prefix normal words of length $n$ can be computed by checking for each binary word whether it is prefix normal, i.e.\ altogether in $O(2^n\cdot n^2)$ time. In this paper, we present an algorithm that is far superior in that it generates only prefix normal words, rather than testing every binary word; it runs in $O(n)$ time per word; and it generates prefix normal words in cool-lex order, constituting a Gray code (subsequent words differ by a constant number of swaps or flips).

\subsection{Bubble Languages and Combinatorial Generation}

Here we give a brief introduction to bubble languages, mostly summarising results from~\cite{RSW12,SW12}. We also give a new characterization of bubble languages in terms of the computation tree of a generating algorithm (Obs.~\ref{obs:tree}).

\begin{definition}
A language ${\cal L} \subseteq \{0,1\}^*$ is called {\em a bubble language} if, for every word $w\in {\cal L}$, exchanging the first occurrence of $01$ (if any) by $10$ results in another word in ${\cal L}$. 
\end{definition}


For example, the languages of Lyndon words, necklaces and pre-necklaces are bubble languages. 
A language ${\cal L}\subseteq \{0,1\}^n$ is a bubble language if and only if each of its fixed-density subsets ${\cal L} \cap {\mathcal B}^n_d$ is a bubble language~\cite{RSW12}. This implies that for generating a bubble language, it suffices to generate its fixed-density subsets.

Next we consider combinatorial generation of binary strings.

Let $w$ be a binary string of length $n$. Let $d$ be the number of $1$s in $w$, and let $i_1,i_2,\ldots, i_d$ denote the positions of the $1$s in $w$. 
Clearly, we can obtain $w$ from the word $1^d0^{n-d}$ with the following algorithm: first swap the last $1$ with the $0$ in position $i_d$, then swap the $(d-1)$st $1$ with the $0$ in position $i_{d-1}$ etc. Note that every $1$ is moved at most once, and in particular, once the $k$'th $1$ is moved into the position $i_k$, the suffix $w_{i_k}\cdots w_n$ remains fixed for the rest of the algorithm. 

These observations lead us to the following generating algorithm (Fig.~\ref{algo:bubble}), which we will refer to as Recursive Swap Generation Algorithm (like Alg.\ 1 from~\cite{SW12}, which in addition includes a language-specific subroutine). It generates recursively all binary strings from ${\cal B}^n_d$ with fixed suffix $\gamma$, 
where $\gamma\in 1\{0,1\}^* \cup \{\epsilon\}$, starting from the string $1^s0^t\gamma$. The call Generate($d,n-d,\epsilon$) generates all binary strings of length $n$ with density $d$.  

\begin{figure}
\begin{algorithm}{Generate($s,t,\gamma$)}{
\qcomment{current string resides in array $w$}
} 
\qif $s>0$ and $t>0$ \\ \qthen 
\qfor $i=1,2,\ldots, t$\\
\qdo 
$w \gets \swap(w,s,s+i)$ \\
{\em Generate}($s-1,i,10^{t-i}\gamma$)\\
$w \gets \swap(w,s,s+i)$
\qend
\qfi \\
{\em Visit}()
\end{algorithm}
\vspace{-4mm}
\caption{The Recursive Swap Generation Algorithm\label{algo:bubble}}
\end{figure}

The algorithm swaps the last $1$ of the first $1$-run with each of the $0$s of the first $0$-run, thus generating a new string each, for which it makes a recursive call. During the execution of the algorithm, the current string resides in a global array $w$.  In the subroutine Visit() we can, but do not have to, print the contents of this array; we may just want to increment a counter (for enumeration), or check some property of the current string. The main point of Visit() is that it touches every object once.

Let $T^n_d$ denote the recursive computation tree generated by a call to {\em Generate}($d,n-d,\epsilon$).
As an example,  Fig.~\ref{fig:example1} illustrates the computation tree $T^7_4$. 
\begin{figure}
\begin{center}
\includegraphics[width=1\textwidth]{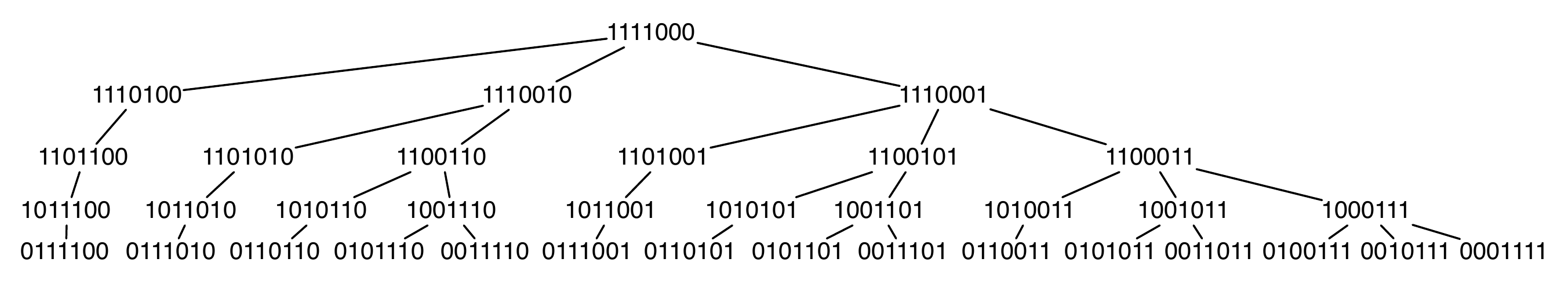}
\caption{\label{fig:example1}The computation tree $T_d^n$ for $n=7,d=4$.}
\end{center}
\end{figure}

\noindent The depth of the tree equals $d$, the number of $1$s; while the maximum degree is $n-d$, the number of $0$s. In general, for the subtree rooted at $v = 1^s0^t\gamma$, we have depth $s$ and maximum degree $t$; in particular, the number of children of $v$ is exactly $t$. In fact, $v$'s $i$th child has the form $1^{s-1}0^i10^{t-i}\gamma$. Moreover, the suffix $\gamma$ remains unchanged in the entire subtree, and the computation tree is isomorphic to the computation tree of $1^s0^t$. This $\gamma$ is called {\em fixed suffix}~\cite{RSW12}. Note also that the critical prefix length $s+t$ strictly decreases along any downward path in the tree. 

The algorithm performs a post-order traversal of the tree, yielding an enumeration of the strings of ${\cal B}_d^n$ in what is referred to as {\em cool-lex order}~\cite{Wi09,SW12,RSW12}. A pre-order traversal of the same tree, which implies moving line 4 of the algorithm before line 1, would yield an enumeration in {\em co-lex order}. A crucial property of cool-lex order is that any two subsequent strings differ by at most two swaps (transpositions), thus yielding a {\em Gray code}~\cite{RSW12}. 
This can be seen in the computation tree $T_d^n$ as follows. Note that in a post-order traversal of $T_d^n$, we have:\\

$next(u) = 
\begin{cases}
parent(u) & \text{if } u \text{ is rightmost child}\\
\text{leftmost descendant of $u$'s right sibling} & \text{otherwise.}\\
\end{cases}
$

\medskip

Let $u,u'$ both be children of $v$. This means that for some $s,t,i,j\in \IN$ and $\gamma\in 1\{0,1\}^* \cup \{\epsilon\}$, we have $v=1^s0^t\gamma$,  $u=1^{s-1}0^{i}10^{t-i}\gamma$, and $u'=1^{s-1}0^{j}10^{t-j}\gamma$. Let $v'$ be a  descendant of $v$ along the leftmost path, i.e.\ $v' = 1^k01^{s-k}0^{t-1}\gamma$ for some $k$. Then
\begin{align}\label{eq:swap}
v &= \swap(u,s,s+i) & \qquad \text{(parent)}\\
u' &= \swap(u,s+i,s+j) & \qquad \text{(sibling)} \nonumber\\
v' &= \swap(v,k,s+1) & \qquad \text{(descendant along leftmost path)}\nonumber
\end{align}

\medskip 

We now state a crucial property of bubble languages with respect to the Recursive Swap Generating Algorithm which follows immediately from the definition of bubble languages:

\begin{observation}\label{obs:tree}
A language ${\cal L}$ is a bubble language if and only if, for every $d=0,\ldots,n$, its fixed-density subset ${\cal L}\cap {\mathcal B}_d^n$ is closed w.r.t.\ parents and left siblings in the computation tree $T_d^n$ of the Recursive Swap Generating Algorithm. In particular, if ${\cal L}\cap {\mathcal B}_d^n \neq \emptyset$, then it forms a subtree rooted in $1^d0^{n-d}$.
\end{observation}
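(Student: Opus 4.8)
The plan is to isolate a single combinatorial lemma that pins down exactly what the bubble move does to a node of the computation tree, and then to read off both directions of the equivalence, and the ``in particular'' clause, as short formal consequences. Throughout, let $\beta(u)$ denote the word obtained from $u$ by exchanging its first occurrence of $01$ (if any) by $10$, fix $n$ and a density $d$, and recall two facts that may be used freely: every word of $\mathcal{B}_d^n$ occurs exactly once as a node of $T_d^n$, whose root is $1^d0^{n-d}$ and whose node $v=1^s0^t\gamma$ has as its $i$-th child the word $1^{s-1}0^i10^{t-i}\gamma$; and $\mathcal{L}$ is a bubble language if and only if each of its fixed-density slices $\mathcal{L}\cap\mathcal{B}_d^n$ is one~\cite{RSW12}.

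\medskip\noindent\textbf{Key Lemma.} \emph{For every node $u$ of $T_d^n$: if $u$ is the root $1^d0^{n-d}$ then $\beta(u)=u$; if $u$ is the leftmost child of its parent then $\beta(u)=\mathit{parent}(u)$; in every other case $\beta(u)$ is the immediate left sibling of $u$.} To prove this I would write $u$ in its unique canonical form $u=1^s0^t\gamma$ with $t\ge1$ and $\gamma\in 1\{0,1\}^*\cup\{\epsilon\}$. If $\gamma=\epsilon$, then having density $d$ forces $u=1^d0^{n-d}$, which contains no $01$, so $\beta(u)=u$. Otherwise $\gamma=1\gamma'$; since within the critical prefix $1^s0^t$ every pair of adjacent letters is $11$, $10$ or $00$, the first $01$ of $u$ occupies positions $s+t$ and $s+t+1$, and hence $\beta(u)=\swap(u,s+t,s+t+1)=1^s0^{t-1}10\gamma'$. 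Decomposing $\gamma'=0^m\gamma''$ canonically (so $\gamma''\in 1\{0,1\}^*\cup\{\epsilon\}$), the child description above identifies $u$ as the $t$-th child of $v=1^{s+1}0^{t+m}\gamma''$. It then remains to compare canonical forms: for $t\ge2$ one gets $\beta(u)=1^s0^{t-1}10^{m+1}\gamma''$, which is exactly the $(t-1)$-th child of $v$, i.e.\ the immediate left sibling of $u$; and for $t=1$ one gets $\beta(u)=1^{s+1}0^{m+1}\gamma''=v=\mathit{parent}(u)$.

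Granting the lemma, the equivalence is routine. For the forward implication, if $\mathcal{L}$ is a bubble language then so is $\mathcal{L}':=\mathcal{L}\cap\mathcal{B}_d^n$ by~\cite{RSW12}, i.e.\ $\mathcal{L}'$ is closed under $\beta$; iterating $\beta$ on a non-leftmost child of some $v\in\mathcal{L}'$ walks it one step left through its siblings until the leftmost child of $v$ is reached, and then one further step up to $v$, so by the lemma $\mathcal{L}'$ is closed both under left siblings and under parents. For the converse, if every $\mathcal{L}\cap\mathcal{B}_d^n$ is closed under parents and left siblings, then by the lemma it is closed under $\beta$ (root $\mapsto$ root, leftmost child $\mapsto$ parent, any other node $\mapsto$ immediate left sibling), hence is a bubble language, hence so is $\mathcal{L}$ by~\cite{RSW12}. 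Finally, closure under parents makes $\mathcal{L}\cap\mathcal{B}_d^n$ ancestor-closed, so if it is nonempty then iterating $\mathit{parent}$ from any element reaches $1^d0^{n-d}$, which therefore lies in it, and the tree path between any two of its nodes consists only of ancestors of its endpoints and so stays inside it; thus $\mathcal{L}\cap\mathcal{B}_d^n$ spans a connected subgraph of $T_d^n$ with topmost node $1^d0^{n-d}$, that is, a subtree rooted there.

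I expect essentially all of the effort to sit in the Key Lemma, and within it the one delicate point is the last step: matching the swapped word $1^s0^{t-1}10\gamma'$ with the correct child of $v$ (or with $v$ itself when $t=1$) through the secondary decomposition $\gamma'=0^m\gamma''$. Once that identity is checked, both implications and the subtree claim drop out with no further work.
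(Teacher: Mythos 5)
Your proof is correct, and since the paper offers no explicit argument here (it asserts the observation ``follows immediately from the definition of bubble languages''), your Key Lemma --- identifying $\beta(u)$ as the immediate left sibling of $u$, or the parent when $u$ is a leftmost child, via the decomposition $u=1^s0^t1\,0^m\gamma''$ --- is precisely the computation the paper leaves implicit, and it matches the parent/sibling swap relations the paper records in its equation~\eqref{eq:swap}. Both directions of the equivalence and the subtree claim then follow exactly as you describe, so this is essentially the intended argument, just written out in full.
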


Using this property, the Recursive Swap Generating Algorithm can be used to generate {\em any} fixed-density bubble language ${\cal L}$, as long as we have a way of deciding, for a node $w=1^s0^t\gamma$, already known to be in ${\cal L}$, which is its rightmost child (if any) that is still in ${\cal L}$. If such a child exists, and it is the $k$th child $u=1^{s-1}0^k10^{t-k}\gamma$, then the bubble property ensures that all children to its left are also in ${\cal L}$. Thus, line $2.$ in the algorithm can simply be replaced by ``for $i=1,\ldots,k$''. Moreover, the Recursive Swap Generating Algorithm, which visits the words in the language in cool-lex order, will yield a Gray code, since because of this closure property, $next(u)$ will again either be the parent, or a node on the leftmost path of the right sibling, both of which are reachable within two swaps, see~\eqref{eq:swap}.

In~\cite{SW12}, a generic generating algorithm was given which moves the job of finding this rightmost child $k$ into a subroutine Oracle($s,t,\gamma$). If Oracle($s,t,\gamma$) runs in time $O(k)$, then we have a CAT algorithm. In general, this will not be possible, and a generic Oracle  tests for each child from left to right (or from right to left) whether it is in the language. Because of the bubble property, after the first negative (positive) test, it is guaranteed that no more children will be in the language, and the running time of the algorithm is amortized that of the membership tester. The crucial trick is  that it is not necessary to have a {\em general} membership tester, since all we want to know is which of the children of a node {\em already known to be in ${\cal L}$} are in ${\cal L}$; moreover, the membership tester is allowed to use other information, which it can build up iteratively while examining earlier nodes.

\section{Combinatorial Generation of Prefix Normal Words}~\label{sec:combgenpnw}

In this section we prove that the set of prefix normal words $\LPN$ is a bubble language.  Then, by providing some properties regarding membership testing, we can apply the cool-lex framework to generate all prefix normal words of a given length $n$ and density $d$ in $O(n)$-amortized time.  By concatenating the lists together for all densities in increasing order, we obtain an $O(n)$-amortized time algorithm to list all prefix normal words of length $n$.

\begin{lemma}
\label{lemma:PNbubble}
 The language $\LPN$ is a bubble language.
\end{lemma}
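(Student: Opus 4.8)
The plan is to show directly that if $w \in \LPN$ and $w \neq 1^n$, then the word $w'$ obtained from $w$ by replacing the first occurrence of $01$ with $10$ is again prefix normal. Writing $w$ in its canonical form $w = 1^s0^t\gamma$ with $\gamma \in 1\{0,1\}^* \cup \{\epsilon\}$ and $s \geq 0$, $t \geq 1$, the first occurrence of $01$ sits at positions $s, s+1$ precisely when $s \geq 1$; if $s = 0$ there is no $01$ to swap and there is nothing to prove. So assume $s \geq 1$, and note $w' = \swap(w, s, s+1) = 1^{s-1}0\,1\,0^{t-1}\gamma$; equivalently, $w'$ is obtained from $w$ by moving one $1$ from position $s$ to position $s+1$. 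The goal is to verify $F(w',i) = P(w',i)$ for all $1 \le i \le n$, using that $F(w,i) = P(w,i)$ for all $i$.

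The key observation is how the prefix sums change. Since the swap only affects positions $s$ and $s+1$, and moves a $1$ from position $s$ to position $s+1$, we get $P(w',i) = P(w,i)$ for all $i \neq s$, while $P(w',s) = P(w,s) - 1 = s - 1$ (as the length-$s$ prefix of $w$ is $1^s$). So the prefix-sum profile of $w'$ is that of $w$ with a single "dip" of size one at index $s$. First I would argue the easy inequality $F(w',i) \geq P(w',i)$: this is immediate for $i \neq s$ since the length-$i$ prefix of $w'$ is itself a substring, and for $i = s$ the substring $w'_2\cdots w'_{s+1} = 0^{s-2}01 \cdots$ — more carefully, the length-$s$ prefix $w'_1\cdots w'_s = 1^{s-1}0$ has $s-1$ ones, matching $P(w',s)$, so trivially $F(w',s) \ge s-1$. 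The real content is the reverse inequality $F(w',i) \leq P(w',i)$, i.e. no length-$i$ substring of $w'$ has more than $P(w',i)$ ones.

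For the reverse inequality, fix a substring $u$ of $w'$ of length $i$ and let $u_0$ be the corresponding substring of $w$ at the same positions. Since $w$ and $w'$ differ only by the location of one $1$ inside positions $\{s, s+1\}$, we have $|u|_1 \le |u_0|_1 + 1 \le P(w,i) + 1 = F(w,i)+1$, and moreover $|u|_1 > |u_0|_1$ can only happen when $u$ contains position $s+1$ but not position $s$ — that is, $u$ starts at position $s+1$. In every other case $|u|_1 \le |u_0|_1 \le P(w,i) = P(w',i)$ (the last equality using that for these substrings $i$ can be anything, but we need $P(w',i) = P(w,i)$, which holds unless $i = s$; the case $i=s$ needs separate care). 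So the only dangerous substrings are those of $w'$ beginning at position $s+1$, namely $u = w'_{s+1}\cdots w'_{s+i} = 1\,0^{t-1}\gamma_1\cdots$. I would bound $|u|_1$ by comparing it to a genuine prefix: the substring $1\,0^{t-2}\gamma'$ of $w$ starting at position $s$ has the same multiset of symbols shifted, and since $w \in \LPN$ controls all of its substrings, $|u|_1 = |w_s w_{s+2}\cdots w_{s+i}|_1 \le F(w, i) = P(w,i)$; then checking that this is $\le P(w',i)$ requires ruling out $i = s$, and when $i = s$ one verifies the substring starting at $s+1$ has length $s$ ending at position $2s$, with at most $s - 1$ ones by a direct argument from prefix-normality of $w$ applied to length $s-1$ or $s$.

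The main obstacle I anticipate is precisely this bookkeeping around the single index $i = s$ where $P(w',s) = P(w,s) - 1$: one must show that no length-$s$ substring of $w'$ has $s$ ones. But a length-$s$ substring with $s$ ones would be $1^s$, and $w'$ contains $1^s$ as a factor only if $w$ did (the swap destroyed the unique leading $1$-run $1^s$, replacing it by $1^{s-1}01\ldots$, and creates no new run of $s$ consecutive ones because position $s$ is now $0$); since $w$ is prefix normal and its length-$s$ prefix is $1^s$, having another $1^s$ factor in $w$ is consistent, but that factor survives in $w'$ only if it is disjoint from position $s$ — and if such a factor existed, it already forces a contradiction, because $w$ prefix normal with a $1^s$ factor elsewhere is fine, yet in $w'$ that same factor would give $F(w',s) = s > s-1 = P(w',s)$. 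The resolution is that $w$ prefix normal with leading run exactly $1^s0$ cannot contain another occurrence of $1^s$: such an occurrence would have to start at some position $p \geq 2$, but then the length-$(s+1)$ window could be pushed to include it and, combined with $t \ge 1$, one derives that $F(w, s+1) \ge s+1 > P(w,s+1) = s$, contradicting prefix-normality of $w$. Once this is pinned down, all cases close and the lemma follows; by Observation~\ref{obs:tree} it then suffices to have checked the single-swap condition, which is exactly what we did.
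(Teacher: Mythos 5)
Your proof analyzes the wrong operation, and the statement you actually set out to prove is false. In $w = 1^s0^t\gamma$ the factor occupying positions $s,s+1$ is $10$, not $01$; the \emph{first occurrence of $01$} is at positions $s+t,\,s+t+1$, i.e., at the boundary between the first $0$-run and $\gamma$ (and it exists whenever $\gamma\neq\epsilon$, even if $s=0$). The bubble operation therefore sends $1^s0^t1\gamma'$ to $1^s0^{t-1}10\gamma'$, whereas your $w'=\swap(w,s,s+1)=1^{s-1}010^{t-1}\gamma$ is the result of replacing the first occurrence of $10$ by $01$ --- the reverse operation, which in the computation tree moves to the first child rather than to the parent or left sibling. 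That operation does not preserve prefix normality: $w=11011$ is prefix normal (it appears in Table~1), but $\swap(w,2,3)=10111$ is not, since its substring $111$ has more $1$s than its prefix $101$. Consistently with this, the claim on which your final paragraph rests --- that a prefix normal word with leading run exactly $1^s0$ cannot contain another occurrence of $1^s$ --- is false ($11011$ again), and the purported derivation of $F(w,s+1)\ge s+1$ from such an occurrence does not go through: a second copy of $1^s$ elsewhere need not create any length-$(s+1)$ window with $s+1$ ones.

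For comparison, the paper's proof works with the correct operation: writing $w=u01v$ and $w'=u10v$ with $u=1^s0^{t-1}$, it first observes that $P(w,k)\le P(w',k)$ for every $k$ (prefix counts can only \emph{increase} under the bubble swap, the opposite of the ``single dip'' in your setup), so it suffices to match each substring $z$ of $w'$ against a substring of $w$ with at least as many $1$s. The only delicate case is $z=u'1$ with $u'$ a proper suffix of $u$, which is handled by shifting $z$ one position to the left and using that $u$ contains no $01$, hence $u=1^{s}0^{t-1}$. If you rerun your analysis at the correct swap location, the monotonicity $P(w',\cdot)\ge P(w,\cdot)$ is exactly what makes the lemma true, and the case analysis closes without any analogue of your problematic $i=s$ case.
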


\begin{proof}
Let $w$ be a prefix normal word containing an occurrence of $01$. Let $w'$ be the word obtained from $w$ by replacing the first occurrence of $01$ with $10$. Then $w=u01v$, $w'=u10v$ for some $u,v\in \Sigma^{*}$. Let $z$ be a substring of $w'$. We have to show that $|z|_{1}\le P(w',|z|)$.
 
Note that for any $k$, $P(w,k)\le P(w',k)$. In fact, $P(w',|u|+1) = P(w,|u|)+1$, and for every $k\neq |u|+1$, $P(w,k) = P(w',k)$. Now if $z$ is contained in $u$ or in $v$, then $z$ is a substring of $w$, and thus $|z|_1\le P(w,|z|) \le P(w',|z|)$. If $z=u'10v'$, with $u'$ suffix of $u$ and $v'$ prefix of $v$, then  $|z|_{1}=|u'01v'|_{1}\le P(w,|z|) \le P(w',|z|)$. If $z=0v'$, with $v'$ prefix of $v$, then $|z|_{1}<|1v'|_{1}$, and $1v'$ is a substring of $w$, thus $|z|_{1}\le P((w,|z|)\le P(w',|z|)$. 
Else $z=u'1$, with $u'$ suffix of $u$. We can assume that $u'$ is a proper suffix of $u$. Let $z'$ be the substring of $w'$ of the same length as $z$ and starting one position before $z$ (in other words, $z'$ is obtained by shifting $z$ to the left by one position). Since $u$ does not contain $01$ as a substring, we have $u=1^{n}0^{m}$ for some $n\ge 1, m\ge 0$. If $z'$ is a power of $0$'s, then $|z|_{1}=1$ and the claim holds. Else, $|z|_{1}=|z'|_{1}$, and $z'$ is a substring of $w$. Thus $|z|_{1}\le P(w,|z|) \le P(w',|z|)$. \hfill \qed
\end{proof}

In Fig.~\ref{fig:example2}, we give the computation tree $T_4^7$ and highlight the subtree corresponding to $\LPN$.  Since $\LPN$ is a bubble language, by Obs.~\ref{obs:tree} it is closed w.r.t.\ left siblings and parents. However, we still have to find a way of deciding which is the rightmost child of a node that is still in $\LPN$.

\begin{figure}
\begin{center}
\includegraphics[width=1\textwidth]{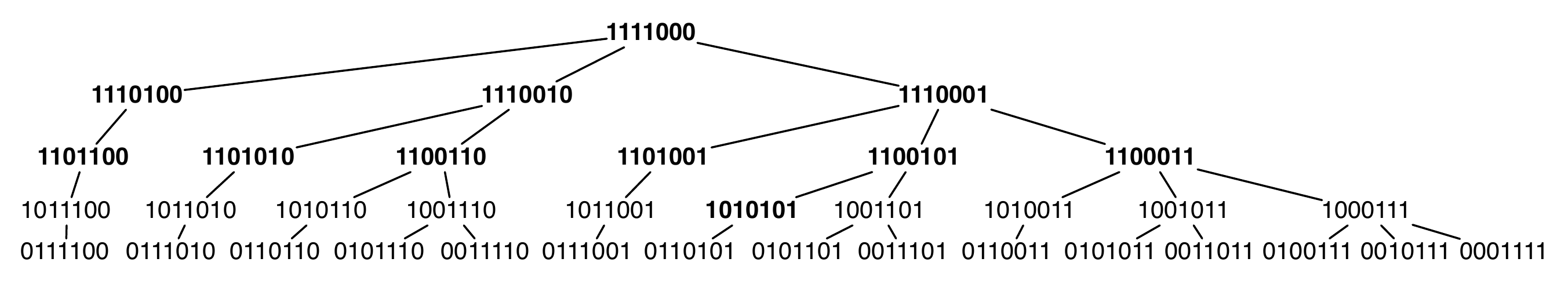}
\caption{\label{fig:example2}The computation tree $T_4^7$ for $n=7,d=4$. Prefix normal words in bold.}
\end{center}
\end{figure}

The following lemma states that, given a prefix normal word $w$, in order to decide whether one of its children in the computation tree is prefix normal, it suffices to check the PN-property for one particular length only: the critical prefix length of the child node. Moreover, this check can be done w.r.t.\ $\gamma$ only. This fact will be crucial in the generating algorithm.

\begin{lemma}\label{lemma:isPNF}
Let $w\in \LPN$, with $w=1^s0^t\gamma$, with $\gamma \in 1\{0,1\}^* \cup \{\epsilon\}$. Let $\bar\gamma = \gamma 0^{s+t}$, i.e.\ $\gamma$ padded with $0$s to length $n$. Let $w' = \swap(w,s,s+i)$. Then $w'\in \LPN$, unless one of the following holds:

\begin{enumerate}
\item $\bar\gamma$ has a substring of length $s+i-1$ with at least $s$ $1$s, or 
\item the string $w'_{s+i} \cdots w'_{2(s+i-1)}$ has at least $s$ $1$s.
\end{enumerate}

Moreover, the latter is the case if and only if $P(\bar\gamma, s + 2(i-1) - t) \geq s-1$ (where by convention, we regard a prefix of negative length as the empty word).

\end{lemma}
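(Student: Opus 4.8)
The plan is to establish the biconditional ``$w'\in\LPN$ if and only if neither condition holds'', from which the ``moreover'' clause follows by a direct computation. Write $w'=1^{s-1}0^{i}10^{t-i}\gamma$ and set $K:=cr(w')=s+i-1$, so $P(w',K)=s-1$. Since $w$ and $w'$ differ only in positions $s$ and $s+i$, one has $P(w',j)=P(w,j)$ for $j\le s-1$ and for $j\ge s+i$, and $P(w',j)=P(w,j)-1=s-1$ for $s\le j\le K$. I would first record the easy implications ``condition~1 or condition~2 $\Rightarrow w'\notin\LPN$'': each of the two distinguished strings can be padded to a length-$K$ window of $w'$ with $\ge s$ ones (there is ample room to the left of position $s+i$, since $K\le s+t-1$), so $F(w',K)\ge s>P(w',K)$.

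The heart of the argument is the reduction to the single length $K$: \emph{$w'\in\LPN$ if and only if $F(w',K)\le s-1$}. One direction is trivial. For the converse, assume $F(w',K)\le s-1$. As $F(w',\cdot)$ is non-decreasing, for $j\le K$ we get $F(w',j)\le\min\{j,F(w',K)\}\le\min\{j,s-1\}=P(w',j)$. For $j>K$ (hence $j\ge s+i$ and $P(w',j)=P(w,j)$) let $z$ be a length-$j$ substring of $w'$. A length argument shows $z$ cannot contain position $s$ without also containing $s+i$; if $z$ contains both or neither of $s,s+i$ then $z$ occupies the same positions as a substring of $w$ with the same number of $1$s --- when it contains neither it lies inside the suffix $0^{t-i}\gamma$ of $w'$, which is a substring of $w$ --- so $|z|_1\le F(w,j)=P(w,j)=P(w',j)$. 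The one delicate case, and the main obstacle, is $z=w'_a\cdots w'_{a+j-1}$ with $s+1\le a\le s+i$, i.e.\ $z$ straddles the ``moved'' $1$ at position $s+i$ but not position $s$: here the naive comparison with $w$ over-counts by exactly one $1$.

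To close this gap I would use the hypothesis a second time: $F(w',K)\le s-1$ forces condition~1 to fail, because a length-$K$ window of $\bar\gamma$ with $\ge s$ ones contains a block $\eta$ of $\gamma$ with $|\eta|_1\ge s$ and $|\eta|\le K$, and $\eta$, occurring in $w'$ inside the $\gamma$-tail, extends leftwards there to a length-$K$ window, contradicting $F(w',K)\le s-1$. Now extend $z$ leftwards to start at position $s$; this only prepends $0$s, so the count is unchanged, and the corresponding substring of $w$ drops the all-$1$ prefix $1^{s-1}$, giving $|z|_1=P(w,a+j-1)-(s-1)$. Thus $|z|_1\le P(w,j)$ is equivalent to $|w_{j+1}\cdots w_{a+j-1}|_1\le s-1$; but this substring of $w$ has length $a-1\le K$ and lies entirely past the critical prefix (as $j+1>s$), hence inside $0^{t}\gamma$, so if it had $\ge s$ ones its ones would form a block of $\gamma$ of length $\le K$ with $\ge s$ ones, which relocated to its position in $\bar\gamma$ and right-padded with $0$s (this is exactly why $\bar\gamma=\gamma0^{s+t}$ is defined) would give a length-$K$ substring of $\bar\gamma$ with $\ge s$ ones --- contradicting the failure of condition~1. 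Hence $|z|_1\le P(w',j)$ and $w'\in\LPN$.

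Finally I would identify the conditions with $F(w',K)\ge s$ and verify the reformulation. A length-$K$ substring of $w'$ is either (a) the prefix $1^{s-1}0^{i}$, with $s-1<s$ ones; (b) a substring of the suffix $0^{t-i}\gamma$, which has $\ge s$ ones if and only if $\bar\gamma$ has one of length $K$ (shift the block of ones to start in $\gamma$ and pad with trailing $0$s), i.e.\ condition~1; or (c) a window $w'_a\cdots w'_{a+K-1}$ straddling position $s+i$, i.e.\ $2\le a\le s+i$. In case (c), for $a\le s$ a direct count using $i\le t$ gives at most $s-1$ ones; for $s+1\le a\le s+i$ the number of ones equals $1+P(0^{t-i}\gamma,a-2)$, which is non-decreasing in $a$ and hence maximal at $a=s+i$, where the window is precisely the string $w'_{s+i}\cdots w'_{2(s+i-1)}$ of condition~2. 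Combining (a)--(c) with the easy implications noted at the start yields $F(w',K)\ge s$ if and only if condition~1 or condition~2 holds. For the last sentence: that window is a single $1$ followed by the first $s+i-2$ symbols of $0^{t-i}\gamma$, so it has $\ge s$ ones if and only if $P(0^{t-i}\gamma,s+i-2)\ge s-1$, and $P(0^{t-i}\gamma,s+i-2)=P(\gamma,s+2(i-1)-t)=P(\bar\gamma,s+2(i-1)-t)$, the negative-length convention covering the case $s+i-2\le t-i$; windows running past position $n$ are absorbed by the same conventions, which is the only additional bookkeeping in the proof.
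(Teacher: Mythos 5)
Your proof is correct, and it is organized around a genuinely different decomposition than the paper's. The paper argues only the contrapositive of the ``unless'' clause: it assumes a violating substring $u$ of $w'$ of \emph{arbitrary} length and, by splitting on whether $u$ is a substring of $w$ or contains the moved $1$ (reducing long violators to shorter ones by chopping off their overlap with the prefix), massages $u$ into a witness for condition~1 or~2. You instead first prove, as a standalone reduction, that $w'\in\LPN$ iff $F(w',K)\le s-1$ for the single length $K=cr(w')$ --- a fact the paper only asserts informally in the prose preceding the lemma --- and then exhaustively classify the length-$K$ windows of $w'$. Your delicate case (a long window straddling the moved $1$ but not position $s$, resolved by invoking the failure of condition~1 a second time) is the counterpart of the paper's ``reduce to Case~1'' step, and both proofs use the same extension-and-padding trick into $\bar\gamma$. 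What your route buys is the full biconditional: you explicitly check that each condition forces $F(w',K)\ge s > P(w',K)$, i.e.\ that the conditions are not merely necessary but sufficient for $w'\notin\LPN$. The paper leaves this converse implicit, even though the generation algorithm needs it (a test that rejected a prefix normal child would, by the bubble closure, silently discard all of its right siblings as well). Your monotonicity observation in case~(c) --- that among windows starting in $(s,s+i]$ the count is maximized exactly at the window named in condition~2 --- also explains \emph{why} that particular window is the right one to test, which the paper's construction obtains only as the endpoint of its extension argument. The remaining loose ends in your write-up (windows running past position $n$, and the slightly overstated ``if and only if'' inside case~(b), whose easy direction is really supplied by your opening paragraph rather than by a type-(b) window) are genuinely absorbed by the $\bar\gamma$ padding conventions, as you say, so they do not affect correctness.
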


\begin{proof}
Let's assume that $w'\not\in \LPN$. Then there is a substring $u$ of $w'$ s.t.\ $|u|_1 > P(w',|u|)$. Let $m$ be the length of $u$.

{\em Case 1.} $u$ is a substring of $w$. Since $w\in \LPN$, therefore $|u|_1 \leq P(w,m)$. Since also $|u|_1 > P(w',m)$, this implies $s\leq m \leq s+i-1$, because for all other arguments, $P(w,\cdot)$ and $P(w',\cdot)$ coincide. Note that $u$ must have an occurrence in $w'$ which contains neither of the swapped bits, else it would not be a substring of $w$. Thus $u$ starts at some position to the right of $s+i$. Therefore we can write $u = 0^r v$, with $v$ a substring of $\gamma$; in particular, if $u$ is a substring of $\gamma$, then $r=0$ and $v=u$; otherwise, $v$ is a prefix of $\gamma$. Now set $u' = vv'$, with $v'$ the substring of $\bar\gamma$ of length $s+i-1-m+r$ following $v$. Then $u'$ has length $s+i-1$, and since $|v'|_1 \geq 0 = |0^r|_1$,  it contains at least $s$ many $1$s.

{\em Case 2.} $u$ is not a substring of $w$. Therefore it contains at least one of the two swapped bits. It cannot contain the swapped $0$ (in position $s$) because then it would be preceded only by $1$s, in which case the prefix of $w'$ of length $m$ could not have fewer $1$s than $u$. Thus, $u$ contains the swapped $1$ only (in position $s+i$). If $m>s+i-1$, then the prefix of $w'$ of length $m$ overlaps with $u$, i.e.\ we can write $w'_1\cdots w'_m = vv'$ and $u=v'u'$ for some non-empty $v'$ containing the swapped $1$. Since $|u|_1 > P(w',m)$, this implies that also $u'$ has more $1$s than the prefix of the same length. Since $u'$ is a substring of $w$, we are back in Case 1. 

So we have $m\leq s+i-1$. We can write $u=0^r10^{t-i}v$, with $v$ prefix of $\gamma$. Now remove the starting $0$s from $u$ and extend it to the right to get $u' = 10^{t-i}v'$, with $v'$ be the prefix of $\bar\gamma$ of length $k=s+2(i-2)-t$. Then $|u'| = s+i-1$ and $|u'|_1 \geq |u|_1 > P(w',|u|) = s-1$. Moreover, $|u'|_1 = 1 + P(\bar\gamma,s+2(i-1)-t)$. \hfill \qed
\end{proof}

\begin{corollary}\label{coro:isPNF}
Given $w=1^s0^t\gamma \in \LPN$. If we know $F(\gamma,\cdot)$ and $P(\gamma,\cdot)$, then it can be decided in constant time whether $w' = \swap(w,s,s+i)$ is prefix normal.
\end{corollary}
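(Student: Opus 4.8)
The plan is to derive the corollary directly from Lemma~\ref{lemma:isPNF} by observing that all three conditions appearing there can be evaluated in constant time once $F(\gamma,\cdot)$ and $P(\gamma,\cdot)$ are available. Recall that $w' = \swap(w,s,s+i)$ has critical prefix $1^{s-1}0^i1$ (of length $s+i$) followed by $0^{t-i}\gamma$, so checking membership in $\LPN$ reduces, by the lemma, to checking a single length, namely $s+i-1$, and this check splits into the two stated conditions.

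First I would handle condition 1: $\bar\gamma = \gamma 0^{s+t}$ has a substring of length $s+i-1$ containing at least $s$ ones. Since appending zeros never creates new ones, $F(\bar\gamma, \ell) = F(\gamma,\ell)$ for every $\ell$ (any window of length $\ell$ in $\bar\gamma$ has at most as many ones as the best window of length $\min(\ell,|\gamma|)$ inside $\gamma$, and that bound is attained). Hence condition 1 is equivalent to $F(\gamma, s+i-1) \ge s$, a single table lookup. Then condition 2: by the final sentence of Lemma~\ref{lemma:isPNF}, condition 2 holds iff $P(\bar\gamma, s+2(i-1)-t) \ge s-1$; and again $P(\bar\gamma,\ell) = P(\gamma,\ell)$ for $\ell \le |\gamma|$ while $P(\bar\gamma,\ell) = |\gamma|_1 = P(\gamma,|\gamma|)$ for larger $\ell$, with the negative-length convention giving $0$. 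So this too is one lookup (after an $O(1)$ arithmetic computation of the index and a clamp to the valid range). Since $w'\in\LPN$ iff neither condition holds, the whole test is $O(1)$.

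The only genuine subtlety — and the step I would be most careful about — is bookkeeping with $\bar\gamma$ versus $\gamma$ and the off-by-one in the prefix length $s+2(i-1)-t$, which can be negative or can exceed $|\gamma|$; the corollary's phrasing ``if we know $F(\gamma,\cdot)$ and $P(\gamma,\cdot)$'' is exactly what makes the padding harmless, because both functions extend to $\bar\gamma$ by the trivial monotone rules just described. I would state these two extension identities explicitly as a one-line remark, then conclude: evaluate $F(\gamma,s+i-1)$ and $P(\gamma,\min(|\gamma|,\max(0,s+2(i-1)-t)))$, compare against $s$ and $s-1$ respectively, and return ``prefix normal'' iff both comparisons fail. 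This is manifestly constant time, completing the proof.

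\begin{proof}
Write $w' = \swap(w,s,s+i) = 1^{s-1}0^i10^{t-i}\gamma$, so its critical prefix has length $s+i$. By Lemma~\ref{lemma:isPNF}, $w'\in\LPN$ unless condition 1 or condition 2 holds, so it suffices to test these two conditions in constant time.

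For condition 1, note that appending $0$s to a word changes no $F$-value: for every $\ell$ we have $F(\bar\gamma,\ell) = F(\gamma,\ell)$, where for $\ell > |\gamma|$ we interpret $F(\gamma,\ell)$ as $F(\gamma,|\gamma|) = |\gamma|_1$ (a window of length $\ell$ in $\bar\gamma$ contains at most $|\gamma|_1$ ones, and this is attained). Hence condition 1 is equivalent to $F(\gamma, s+i-1) \ge s$, a single lookup in the table for $F(\gamma,\cdot)$.

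For condition 2, Lemma~\ref{lemma:isPNF} states that it holds if and only if $P(\bar\gamma,\, s+2(i-1)-t) \ge s-1$, where a prefix of negative length is the empty word. Since $P(\bar\gamma,\ell) = P(\gamma,\ell)$ for $0\le \ell\le|\gamma|$, while $P(\bar\gamma,\ell) = P(\gamma,|\gamma|)$ for $\ell>|\gamma|$ and $P(\bar\gamma,\ell)=0$ for $\ell<0$, the quantity $P(\bar\gamma,\, s+2(i-1)-t)$ equals $P(\gamma,\, \ell^*)$ with $\ell^* = \min\bigl(|\gamma|,\max(0,\,s+2(i-1)-t)\bigr)$, which is computed with $O(1)$ arithmetic and then read off from the table for $P(\gamma,\cdot)$.

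Therefore $w'$ is prefix normal if and only if $F(\gamma,s+i-1) < s$ and $P(\gamma,\ell^*) < s-1$; both quantities are obtained in constant time from $F(\gamma,\cdot)$ and $P(\gamma,\cdot)$. \hfill\qed
\end{proof}
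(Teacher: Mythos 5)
Your proposal is correct and takes the same route the paper intends: the corollary is stated without proof as an immediate consequence of Lemma~\ref{lemma:isPNF}, and you simply make explicit that condition 1 reduces to the single lookup $F(\gamma,s+i-1)\ge s$ and condition 2, via the lemma's final equivalence, to one lookup in $P(\gamma,\cdot)$ after clamping the index $s+2(i-1)-t$. The extension identities for $\bar\gamma$ and the boundary bookkeeping you add are exactly the details the paper leaves implicit, so there is nothing to flag.
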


\begin{lemma}
\label{lemma:FPCalculate}
Let $\gamma' = 10^r\gamma$, with $\gamma\in 1\{0,1\}^* \cup \{\epsilon\}$. Then for all $i=0,1,\ldots,|\gamma'|$,

$$ F(\gamma',i) =  \begin{cases}\max(P(\gamma',i), F(\gamma,i)) & \text{ for } i\leq |\gamma|\\
\max(P(\gamma',i), F(\gamma,|\gamma|)) & \text{ for } i > |\gamma|. 
\end{cases}$$

\end{lemma}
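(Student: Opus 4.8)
The statement relates $F$-values of $\gamma' = 10^r\gamma$ to those of $\gamma$. A substring $u$ of $\gamma'$ of length $i$ either is entirely contained in $\gamma$, or it starts within the prefix $10^r$ (the newly prepended block). The plan is to split on these two cases and maximize the $1$-count separately over each.

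**Key steps.** First I would formalize the observation that every substring of $\gamma'$ of length $i$ falls into exactly one of two families: (a) substrings of $\gamma$, for which the best $1$-count is $F(\gamma,i)$ when $i \le |\gamma|$, and which do not exist when $i > |\gamma|$ (so the $F(\gamma,\cdot)$ contribution caps at $F(\gamma,|\gamma|)$, matching the two cases of the claimed formula); and (b) substrings that include at least one position from the prepended block $10^r$. For family (b), I would argue that the $1$-count is maximized by the prefix of $\gamma'$ of that length: any length-$i$ substring $u$ of $\gamma'$ that overlaps the prefix block $10^r$ can be shifted left to the prefix $w'_1\cdots$, i.e.\ $\gamma'_1\cdots\gamma'_i$, without decreasing the number of $1$s — shifting left can only move the window onto the single leading $1$ and off of $0$s (within $10^r$) or onto symbols already counted, formally because $\gamma'$ begins with a maximal run $10^r$ followed by $\gamma$, which itself begins with $1$. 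Hence the best $1$-count over family (b) is exactly $P(\gamma',i)$. Taking the maximum over (a) and (b) yields the stated formula.

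**The main obstacle.** The delicate point is family (b): proving rigorously that among all length-$i$ windows touching the block $10^r$, the leftmost window (the prefix) is optimal. One must handle the window that straddles the boundary between $0^r$ and $\gamma$ — sliding it left moves its right end leftward inside $\gamma$, potentially losing $1$s from $\gamma$, so the naive "shift left" argument needs care. The clean way is: let $u$ start at position $p \le r+1$ (so it meets the block), and compare $|u|_1$ with $P(\gamma',i)$ directly by observing that replacing $u$ by the prefix of length $i$ exchanges a possibly-$0$ leading segment of $u$ for the segment $10^{p-1}$ of $\gamma'$, which contains exactly one $1$, while the overlapping tail is shared; since $u$ contains at most one $1$ outside $\gamma$ (namely the leading $1$ of the block, if $p=1$), a short case analysis on whether $p=1$ settles it. Everything else is bookkeeping on lengths and the definition of $P$ and $F$.
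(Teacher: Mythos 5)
There is a genuine gap: your key claim for family (b) --- that every length-$i$ window touching the block $10^r$ has at most $P(\gamma',i)$ ones --- is false. Family (b), as you define it, also contains windows that start at a position $p$ with $2\le p\le r+1$, i.e.\ windows of the form $0^{r+2-p}v$ with $v$ a prefix of $\gamma$. Such a window reaches $p-1$ positions \emph{further into} $\gamma$ than the prefix of length $i$ does, so shifting it left onto the prefix trades those $p-1$ trailing symbols of $\gamma$ (which may contain several $1$s) for the single $1$ at position $1$; this is a net loss whenever the trailing symbols contain at least two $1$s. Concretely, take $\gamma'=1001100$ (so $r=2$, $\gamma=1100$) and $i=3$: the window $\gamma'_3\gamma'_4\gamma'_5=011$ meets the block and has two $1$s, while $P(\gamma',3)=1$. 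The lemma survives only because this window is dominated by $F(\gamma,3)=2$, a term which your decomposition reserves for windows lying entirely inside $\gamma$. The repair sketched in your last paragraph does not close this: the difficulty is not how many $1$s $u$ contains inside the block, but the extra $1$s it picks up at its right end inside $\gamma$ relative to the prefix. The same error breaks your case $i>|\gamma|$, where family (a) is empty and your argument would yield $F(\gamma',i)=P(\gamma',i)$, which is false in general (e.g.\ $\gamma'=100011$, $i=3$ gives $F=2$ but $P(\gamma',3)=1$).

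The paper's proof avoids this by splitting on whether the window contains \emph{position $1$}, rather than on whether it meets the block $10^r$. A window containing position $1$ is literally the prefix of $\gamma'$, contributing $P(\gamma',i)$; every other window is a substring of $0^r\gamma$, and since prepending $0$s adds no $1$s and $F(\gamma,\cdot)$ is nondecreasing, its count is at most $F(\gamma,i)$ for $i\le|\gamma|$ and at most $F(\gamma,|\gamma|)$ otherwise, with both bounds attained. If you reassign the windows $0^{r+2-p}v$ with $p\ge 2$ to that second class, your argument becomes correct and essentially coincides with the paper's.
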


\begin{proof}
A substring of length $i$ either uses the new $1$ in the first position, or it does not. If it does, then it is a prefix of $\gamma'$ and its number of $1$s is given by $P(\gamma',i)$. Else it is a substring of $0^{r-1}\gamma$, and its number of $1$s is given by $F(\gamma,i)$ for $i$ up to the length of $\gamma$, or by the number of $1$s in $\gamma$, $F(\gamma,|\gamma|)$, if $i$ spans all of $\gamma$. \hfill \qed
\end{proof}

\begin{corollary}\label{coro:F}
The $F$-function of $\gamma$ for node $w=1^s0^t\gamma$ can be computed in linear time based on the $F$-function of $w$'s parent node.
\end{corollary}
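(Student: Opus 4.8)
The plan is to derive the corollary directly from Lemma~\ref{lemma:FPCalculate}, after making explicit how the fixed suffix of a node relates to that of its parent in the computation tree $T_d^n$. Throughout, by ``the $F$-function of a node $w=1^s0^t\gamma$'' I mean the array $\bigl(F(\gamma,i)\bigr)_{i=0}^{|\gamma|}$ of its fixed suffix $\gamma$, as in Corollary~\ref{coro:isPNF}.

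First I would pin down the parent relationship. If $w$ is the root $1^d0^{n-d}$ the statement is vacuous, so assume $w$ has a parent. Then $w$ arises from its parent by one of the swaps producing a child, which inserts a $1$ immediately after the first $0$-run; hence $w$'s fixed suffix $\gamma$ is nonempty and can be written uniquely as $\gamma=10^r\gamma'$ with $r\ge 0$ and $\gamma'\in 1\{0,1\}^*\cup\{\epsilon\}$. Since the $j$-th child of a node $1^{s+1}0^{t'}\delta$ has the form $1^s0^j10^{t'-j}\delta$, matching this against $w=1^s0^t\cdot 10^r\gamma'$ forces $j=t$, $t'=t+r$ and $\delta=\gamma'$; that is, the parent of $w$ is $1^{s+1}0^{t+r}\gamma'$ and its fixed suffix is exactly $\gamma'$. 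Thus $w$'s fixed suffix $\gamma$ is obtained from its parent's fixed suffix $\gamma'$ by prepending $10^r$ --- precisely the situation of Lemma~\ref{lemma:FPCalculate}, with the roles of the lemma's $\gamma'$ and $\gamma$ played here by $\gamma$ and $\gamma'$.

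Next I would invoke the lemma. Using $F(\gamma',|\gamma'|)=|\gamma'|_1=P(\gamma',|\gamma'|)$ to merge its two cases, it gives
\[
F(\gamma,i)\;=\;\max\bigl(P(\gamma,i),\;F(\gamma',\min(i,|\gamma'|))\bigr)
\qquad\text{for } i=0,\dots,|\gamma|.
\]
To build the array $\bigl(F(\gamma,i)\bigr)_{i=0}^{|\gamma|}$ it therefore suffices to have (i) the array $\bigl(F(\gamma',i)\bigr)_{i=0}^{|\gamma'|}$, which is the $F$-function of $w$'s parent, assumed known, and (ii) the prefix-count array $\bigl(P(\gamma,i)\bigr)_{i=0}^{|\gamma|}$, which is the vector of prefix sums of the bits of $\gamma$ (these reside in the global array $w$) and is computable in $O(|\gamma|)$ time --- or incrementally from $P(\gamma',\cdot)$ via $P(\gamma,i)=1$ for $1\le i\le r+1$ and $P(\gamma,i)=1+P(\gamma',i-r-1)$ for $i\ge r+1$. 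Given (i) and (ii), each entry $F(\gamma,i)$ is the maximum of two numbers already available, so the whole array is filled in $O(|\gamma|)=O(n)$ time, which proves the corollary.

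I do not anticipate a genuine obstacle: once the parent relationship is spelled out, the argument is bookkeeping on top of Lemma~\ref{lemma:FPCalculate}. The points that need a little care are that correspondence and the boundary cases --- the root (vacuous) and the base case $\gamma'=\epsilon$ (parent is the root), in which the parent's $F$-function is the single value $F(\epsilon,0)=0$ and the formula collapses correctly to $F(\gamma,i)=P(\gamma,i)$, as well as the indices $i>|\gamma'|$, handled by the $\min(i,|\gamma'|)$ above. It is worth recording, for the running-time analysis of the generating algorithm, that this update is carried out once per tree edge the algorithm descends and that $|\gamma|\le n$ always, so each update costs $O(n)$; together with Corollary~\ref{coro:isPNF}, which then makes each child membership test cost $O(1)$, this is what drives the $O(n)$-amortized-time bound.
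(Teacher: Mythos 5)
Your proposal is correct and follows the paper's intended route exactly: the corollary is stated as an immediate consequence of Lemma~\ref{lemma:FPCalculate}, and you supply precisely the bookkeeping the paper leaves implicit (the identification of the parent's fixed suffix $\gamma'$ with the tail of $\gamma=10^r\gamma'$, and the $O(|\gamma|)$ entrywise maximum using the prefix counts $P(\gamma,\cdot)$). The boundary cases and the remark about charging one update per tree edge are consistent with how the paper uses the corollary in its running-time analysis.
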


By applying these results, the algorithm {\em GeneratePN($d,n-d,\epsilon$)}  can be used to generate $\LPN \ \cap  \ {\cal B}^n_d$ in cool-lex Gray code order.  Starting from the left child and proceeding right (with respect to the computation tree $T_n^d$), the algorithm will make a recursive call until a child is no longer prefix normal. 
The membership test is done in the subroutine {\em isPN}, which uses the conditions of Lemma~\ref{lemma:isPNF}. The algorithm maintains an array $F$ which contains the maximum number of $1$s in $i$-length substrings of $\gamma$ (the $F$-function of $\gamma$), and a variable $z$. Before testing the first child, in {\em update($F,s+t$)}, it computes the current $\gamma$'s $F$-function based on the parent's (Corollary~\ref{coro:F}). Note that it is not necessary to compute all of the $F$-function, since all nodes in the subtree have critical prefix length smaller than $s+t$, thus this update is done only up to length $s+t$. After the recursive calls to the children, the array is restored to its previous state in {\em restore($F,s+t$)}. The variable $z$ contains the number of $1$s in the prefix of $\gamma$ which is spanned by the substring of case 2. of Lemma~\ref{lemma:isPNF}, for the first child. It is updated in constant time after each successful call to {\em isPN}, to include the number of $1$s in the two following positions in $\gamma$.

\begin{figure}
\begin{algorithm}{GeneratePN($s,t,\gamma$)}{
\qcomment{$w=1^s0^t\gamma$ must be prefix normal}}
\qif $s>0$ {\bf and} $t>0$ \\
\qthen {\em update}$(F,s+t)$\\
$z \gets P(\gamma,s-t)$ \\
$i \gets 1$\\
\qwhile $i \leq t$ {\bf and} {\em isPN}($\swap(w,s,s+i)$)\\
\qdo $w \gets \swap(w,s,s+i)$\\
{\em GeneratePN}($s-1,i,10^{t-i}\gamma$)\\
{\em update}$(z)$ \\
$i \gets i+1$\\
$w \gets \swap(w,s,s+i)$
\qend \\
{\em restore}($F,s+t$)
\qfi \\
{\em Visit}()
\end{algorithm}
\vspace{-4mm}
\caption{Algorithm generating all prefix normal words in the subtree rooted in $1^s0^t\gamma$.\label{algo:genPN1}}
\end{figure}

\noindent  By concatenating the lists of prefix normal words with densities $0,1,\ldots , n$, we obtain an exhaustive listing of $\LPN \ \cap  \ \Sigma^n$.

\begin{figure}
\begin{algorithm}{GeneratePN($n$)}{
\qcomment{generates all prefix normal words of length $n$}}
\qfor $d=0, 1, \ldots , n$\\
\qdo initialize $F$ of length $n$ with all $0$s\\
{\em GeneratePN($d,n-d,\epsilon$)}
\qend
\end{algorithm}
\caption{Algorithm generating all prefix normal words of length $n$.\label{algo:genPN2}}
\end{figure}

As an example, a call to {\em GeneratePN($5$)} produces the following list of prefix normal words of length 5:
\small
\[ 00000, 10000, 10100, 10010, 10001, 11000, 11010, 10101, 11001, 11100, 11011, 11101, 11110, 11111.\]
\normalsize
These strings are also given in Sec.~\ref{sec:pnw}.  Since the fixed-density listings are a cyclic Gray code (Theorem 3.1 from \cite{RSW12}), it follows that this complete listing is also a Gray code.  In fact, if the fixed-density listings are listed by the odd densities (increasing), followed by the even densities (decreasing), the resulting listing would be a cyclic Gray code.

\begin{theorem}
Algorithm {\em GeneratePN($n$)} generates all prefix normal words of length $n$ in amortized $O(n)$ time per word. 
\end{theorem}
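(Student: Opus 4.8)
The plan is to bound the total work done by the recursive algorithm \emph{GeneratePN($n$)} and divide by the number of prefix normal words generated, showing the ratio is $O(n)$. The key structural fact is Obs.~\ref{obs:tree}: for each density $d$, the prefix normal words of density $d$ form a subtree $S^n_d$ of the computation tree $T^n_d$ rooted at $1^d0^{n-d}$, and the algorithm performs a post-order traversal of exactly this subtree. So it suffices to show that the work charged to each node $w = 1^s0^t\gamma$ of $S^n_d$ is $O(n)$ amortized over the nodes of $S^n_d$, and then sum over all $d$.

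First I would itemize the per-node costs. At a node $w = 1^s0^t\gamma$ with $k$ prefix normal children the algorithm does: (i) \emph{update}$(F,s+t)$ and \emph{restore}$(F,s+t)$, each $O(s+t) = O(n)$ by Cor.~\ref{coro:F}; (ii) the initialization $z \gets P(\gamma, s-t)$, which is $O(n)$ (or $O(1)$ if $P(\gamma,\cdot)$ is maintained as an array analogous to $F$, but $O(n)$ suffices); (iii) the \texttt{while} loop, which runs $k+1$ times if $k < t$ (the last iteration being the failed \emph{isPN} test that stops the loop) or $k = t$ times otherwise --- each iteration costing $O(1)$ for \emph{isPN} (Cor.~\ref{coro:isPNF}, using the maintained $F$-array and $z$) plus $O(1)$ for the two swaps and the \emph{update}$(z)$; and (iv) the $O(1)$ \emph{Visit}(). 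Charging the failed test and the bookkeeping at $w$ to $w$ itself, and charging the successful test for each prefix normal child to that child, the total work is $O(n)$ per node of $S^n_d$ plus $O(1)$ per edge of $S^n_d$. Since $S^n_d$ is a tree, the number of edges is one less than the number of nodes, so the whole traversal costs $O(n \cdot |S^n_d|)$, i.e.\ $O(n)$ amortized per prefix normal word of density $d$.

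Summing over $d = 0, 1, \ldots, n$, the total running time of \emph{GeneratePN($n$)} is $\sum_{d=0}^{n} O(n \cdot |S^n_d|) = O(n) \cdot \sum_{d=0}^{n} |S^n_d| = O(n \cdot |\LPN \cap \Sigma^n|)$, plus the $O(n)$ per density spent initializing $F$, which contributes only $O(n^2) = O(n \cdot |\LPN \cap \Sigma^n|)$ since $|\LPN \cap \Sigma^n| \geq n+1$ (the words $1^j0^{n-j}$ are all prefix normal). Hence the amortized cost per prefix normal word is $O(n)$, as claimed.

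The main obstacle --- and the only genuinely nontrivial point --- is justifying that the per-node update/restore of the $F$-array is correct and genuinely $O(n)$ rather than naively recomputed from scratch at depth: one must argue that the $F$-function needed at $w = 1^s0^t\gamma$ is that of $\gamma$ only up to length $s+t$ (because every node in the subtree rooted at $w$ has critical prefix length strictly less than $s+t$, as noted after the description of $T^n_d$), and that Lemma~\ref{lemma:FPCalculate}/Cor.~\ref{coro:F} let us obtain it from the parent's $F$-array by touching only those $s+t$ entries, with \emph{restore} undoing exactly those changes. Once this locality is pinned down, the amortized analysis is the routine "$O(n)$ per node, tree has linearly many nodes in its leaves-plus-internal-count" argument, and the cross-density sum telescopes against the trivial lower bound on $|\LPN \cap \Sigma^n|$.
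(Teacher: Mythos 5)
Your proposal is correct and follows essentially the same route as the paper's proof: a per-node accounting in the computation tree, charging the positive membership tests and $z$-updates to the children and the failed test to the current node, with the $O(n)$ bound coming from the critical-prefix-length cost of updating/restoring $F$ and computing $z$, and constant-time tests via Corollaries~\ref{coro:isPNF} and~\ref{coro:F}. Your treatment of the cross-density summation and the per-density initialization of $F$ is slightly more explicit than the paper's, but the argument is the same.
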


\begin{proof}
Since $1^d0^{n-d}$ is prefix normal for every $d$, we only need to show that the correct subtrees of $T_d^n$ are generated by the algorithm. By Lemma~\ref{lemma:isPNF}, only those children will be generated that are prefix normal; on the other hand, by the bubble property (Obs.~\ref{obs:tree}), as soon as a child tests negative, no further children will be prefix normal. The running time of the recursive call on $w\in \LPN$ consists of (a) updating and restoring $F$ (lines 2 and 9): the number of steps equals the critical prefix length of $w$, which is $O(n)$; (b) computing $z$ (line 3): again $cr(w)\leq n$, the critical prefix length of $w$, many steps, so $O(n)$; and (c) work within the while-loop (lines 5 to 8), which, for a word with $k$ prefix normal children, consists of $k$ positive and $1$ negative membership tests, of $k$ updates of $z$, and the recursive calls on the positive children. The membership tests take constant time by Corollary~\ref{coro:isPNF}, so does the update of $z$. Since $w$ has $k$ prefix normal children, we charge the positive membership tests and the $z$-updates to the children, and the negative test to the current word.
So for one word $w\in \LPN$, we get $3\cdot O(n) + O(1) + 2\cdot O(1) = O(n)$ work.
\hfill \qed 
\end{proof}


\section{Experimental results}
\newcommand{\pnn}{\ensuremath{\textit{pnw}}}

In this section we present some theoretical and numerical results about the number of prefix normal words and their structure. These have become available thanks to the algorithm presented, which allowed us to generate $\LPN$ up to length 50 on a home computer. 
Let $\pnn(n) := |\LPN \cap \Sigma^n|$. The following lemma follows from the observation that $1^{\lceil n/2 \rceil} w$ is a prefix normal word of length $n$ for all words $w$ of length $\lfloor n/2 \rfloor$.

\begin{lemma}
The number of prefix normal words grows exponentially in $n$. We have that $\pnn(n) \geq 2^{\lfloor n/2\rfloor}$. 
\end{lemma}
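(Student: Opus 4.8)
The plan is to establish the claimed bound by exhibiting an explicit injection from $\Sigma^{\lfloor n/2\rfloor}$ into $\LPN \cap \Sigma^n$, as hinted in the sentence preceding the statement. Concretely, for every binary word $w$ of length $m := \lfloor n/2 \rfloor$, I would consider the word $v_w := 1^{\lceil n/2 \rceil} w$, which has length $\lceil n/2 \rceil + \lfloor n/2 \rfloor = n$. The map $w \mapsto v_w$ is obviously injective (distinct $w$ give distinct suffixes of length $m$), so it suffices to show each $v_w$ is prefix normal; this yields $\pnn(n) = |\LPN \cap \Sigma^n| \geq |\Sigma^m| = 2^{\lfloor n/2\rfloor}$.

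To verify that $v_w$ is prefix normal, I would check the defining condition $F(v_w,i) = P(v_w,i)$ for every $1 \le i \le n$. For $i \le \lceil n/2\rceil$, the prefix of length $i$ is $1^i$, which contains $i$ ones, so $P(v_w,i) = i$; since no substring of length $i$ can contain more than $i$ ones, $F(v_w,i) = i = P(v_w,i)$. For $i > \lceil n/2\rceil$, write $i = \lceil n/2\rceil + j$ with $1 \le j \le m$. Any substring $u$ of length $i$ must, because $|u| = i > \lceil n/2\rceil \ge$ (length of the $0$-free part is exactly $\lceil n/2\rceil$, but more to the point) overlap the block $1^{\lceil n/2\rceil}$ in at least $i - m = \lceil n/2 \rceil + j - m$ positions... here I need to be slightly careful: the suffix $w$ has length $m$, so a substring of length $i$ can use at most $m$ characters from $w$, hence at least $i - m = \lceil n/2\rceil + j - m \ge j$ characters from the prefix block $1^{\lceil n/2\rceil}$; moreover those prefix-block characters that it uses are all $1$s. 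The maximum number of $1$s in such a substring is thus at most $(\text{ones it can grab from the }1\text{-block}) + (\text{ones in the piece of }w)$. The prefix $1^{\lceil n/2\rceil}w_1\cdots w_j$ of length $i$ grabs all $\lceil n/2\rceil$ ones from the block plus $|w_1\cdots w_j|_1$ ones, and I would argue that no length-$i$ substring can do better: sliding the window rightward by one trades a $1$ from the block for the character $w_{j+1}$, which is at most a $1$, so the count is non-increasing as the window moves right. Hence $F(v_w,i) = P(v_w,i)$, and $v_w$ is prefix normal.

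The only genuinely delicate point is the "window is non-increasing as it slides right" monotonicity argument for the range $i > \lceil n/2\rceil$, and making sure the boundary arithmetic with $\lceil\cdot\rceil$ versus $\lfloor\cdot\rfloor$ is handled correctly (in particular that the suffix really has length $\lfloor n/2\rfloor$ and the prefix block length $\lceil n/2\rceil$, which together sum to $n$). An alternative, cleaner phrasing that sidesteps the sliding argument: for $i > \lceil n/2\rceil$, observe that among all length-$i$ substrings, the prefix is the unique one (or one of the ones) whose leftmost $\lceil n/2\rceil$ characters are entirely inside the $1$-block; any other length-$i$ substring starts at position $p \ge 2$, so it misses $p-1$ of the leading $1$s and in exchange can only pick up characters from $w$, each contributing at most $1$ — a net loss of at least $p-1 - (\text{number of new }w\text{-characters})\ge 0$ because it picks up exactly $p-1$ new characters. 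I would present this second argument as the main line since it is a one-step comparison rather than an induction on window position. Everything else is bookkeeping, and the proof is short.
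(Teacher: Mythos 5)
Your proposal is correct and follows exactly the same route as the paper, which simply states the observation that $1^{\lceil n/2 \rceil} w$ is prefix normal for every $w$ of length $\lfloor n/2 \rfloor$ and leaves the verification implicit. Your one-step comparison (any substring starting at position $p \ge 2$ trades $p-1$ guaranteed $1$s from the leading block for $p-1$ characters of $w$, each at most a $1$) correctly fills in the details the paper omits.
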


The first members of the sequence $\pnn(n)$ are listed in \cite{sloane2}, and these values suggest that the lower bound above is not sharp. We turn our attention to the growth rate of $\pnn(n)$ as $n$ increases. Note that $1 \leq \pnn(n)/\pnn(n-1) \leq 2$. The lower bound follows form the fact that all prefix normal words can be extended by adding a $0$ to the end, and the upper bound is implied by the prefix-closed property of $\LPN$.  Fig.~\ref{figGrowth} (left) shows the growth ratio for small values of $n$. The figure shows two interesting phenomena: the values seem to approach 2 slowly, i.e., the number of prefix normal words almost doubles as we increase the length by 1. Second, the values show on oscillation pattern between even and odd values. We have so far been unable to establish these observations theoretically.

\begin{figure}
\begin{minipage}[c]{8cm}
\begin{center}
\includegraphics[scale=0.25]{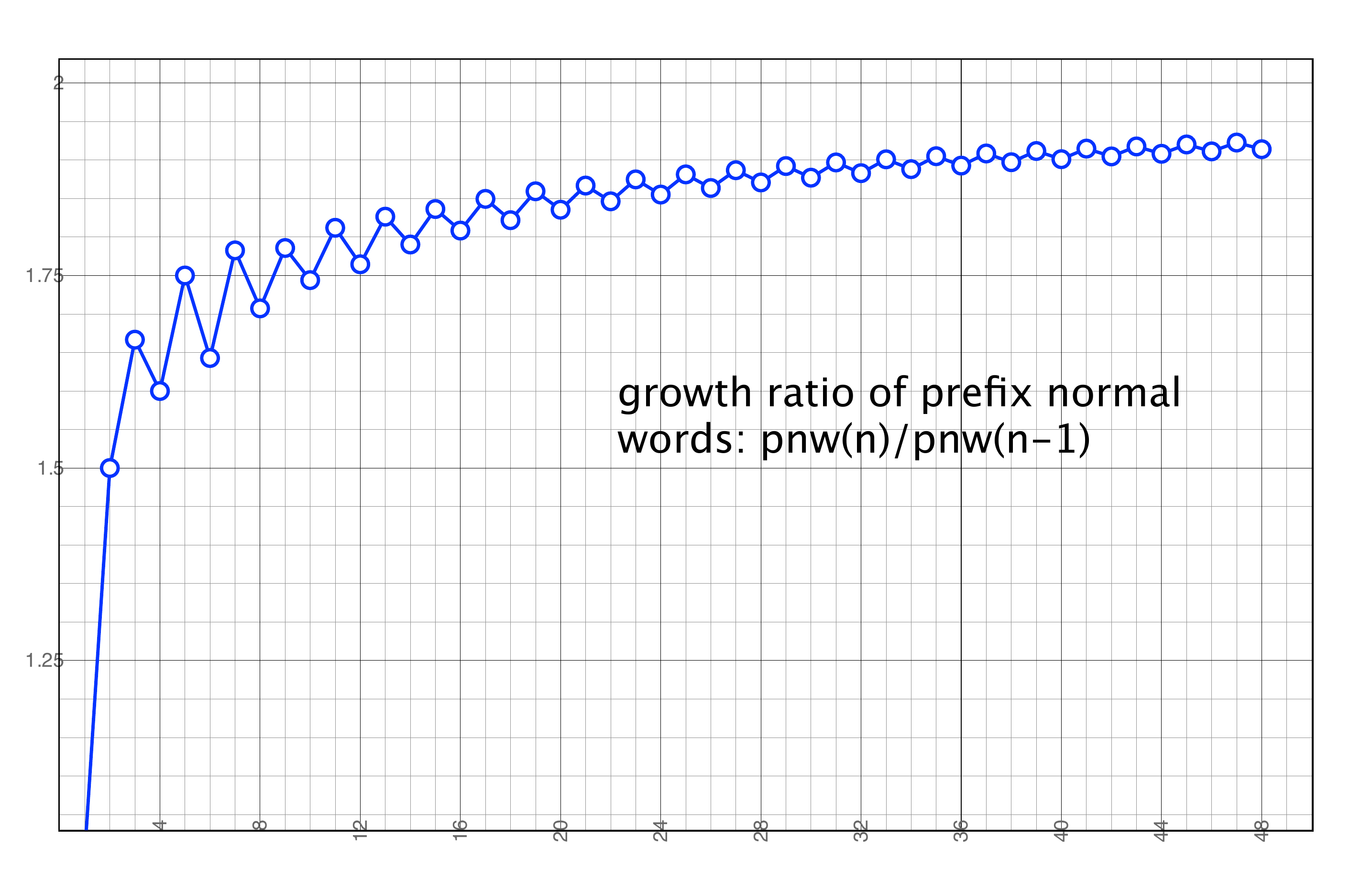}
\end{center}
\end{minipage}
\begin{minipage}[c]{8cm}
\begin{center}
\includegraphics[scale=0.25]{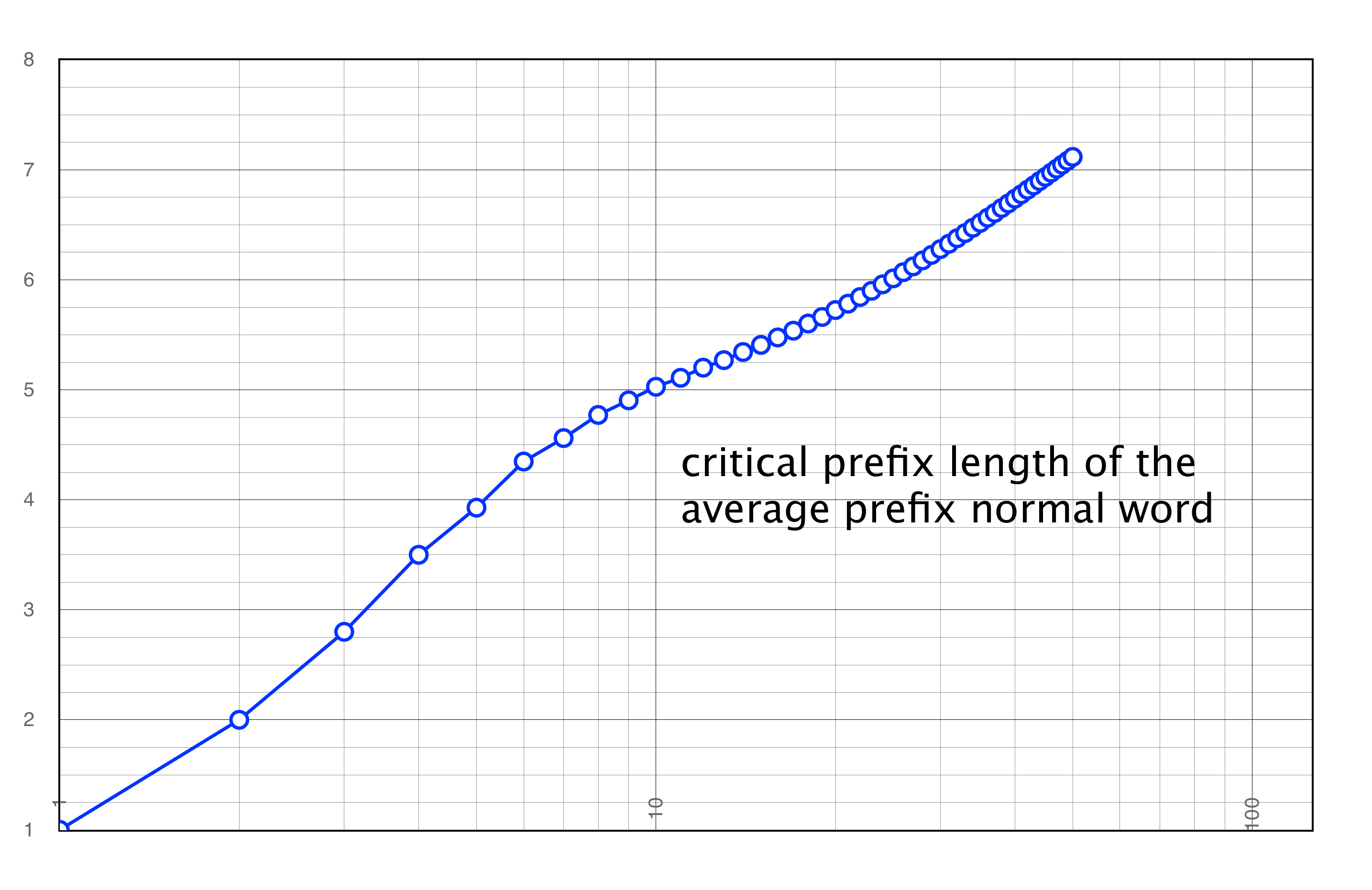}
\end{center}
\end{minipage}
\caption{The value of $\pnn(n)/\pnn(n-1)$ (left), and of $E(S+T)$ for prefix normal words for $n \leq 50$ (right).} 
\label{figGrowth}
\label{figSTGrowth}
\end{figure}

The structure of prefix normal words is also relevant for the generation algorithm, since the amortized running time of the algorithm is bounded above by the average value of the critical prefix length $s+t$ taken over all prefix normal words. This differs from the expected critical prefix length of the prefix normal form of a uniformly random word. For the latter we have the following result.

\begin{lemma}\label{lemma:random_pnf_st}
Given a random word $w$, let $w' = \PNF(w)$. Let $Z$ denote the critical prefix length of $w'$. Then for the the expected value of $Z$ we have $E(Z) = \Theta(\log n)$.
\end{lemma}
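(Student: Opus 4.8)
The plan is to analyze the critical prefix length $Z = cr(w')$ of $w' = \PNF(w)$ for a uniformly random binary word $w$ of length $n$, and show separately that $E(Z) = O(\log n)$ and $E(Z) = \Omega(\log n)$. The key is to express $Z$ in terms of the $F$-function of $w$, since by construction $\PNF(w)$ has the same $F$-function as $w$, and a prefix normal word $w' = 1^s0^t\gamma$ has $P(w',k) = F(w',k) = k$ for $k \le s$, then $P(w',s+1) = s$ while $F(w',s+1) = s$ as well but $P$ stays constant through the $0$-run. More concretely, $s = s(w')$ is the largest $k$ such that $F(w,k) = k$ (the longest prefix length achieving the all-ones maximum), i.e. $s = F(w,1)\cdot[\,\ldots\,]$ — precisely, $s$ is the length of the longest run of $1$s in $w$, since $F(w,k)=k$ iff $w$ contains a substring $1^k$. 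And $t = t(w')$: after the maximal $1$-run, $w'$ has a $0$-run of length $t$ where $t$ is determined by the first place the $F$-function increases again, namely $s+t$ is the smallest length $\ell > s$ with $F(w,\ell) = s+1$, equivalently the smallest $\ell$ such that $w$ has a substring of length $\ell$ containing $s+1$ ones.

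So the first step is to pin down these two combinatorial identities: (i) $s(w') = R(w)$, the length of the longest $1$-run in $w$; and (ii) $s(w') + t(w') = $ the minimum window length containing $R(w)+1$ ones, or equivalently the minimum gap such that somewhere in $w$ we see $R+1$ ones within that many consecutive positions. Then $E(Z) = E(s) + E(t)$ and it suffices to bound each. For the longest run $R(w)$ in a random binary string, it is classical (Erdős–Rényi, or a simple union-bound / second-moment argument) that $R(w) = \log_2 n + \Theta(1)$ with high probability and $E(R) = \log_2 n + O(1) = \Theta(\log n)$. This handles $E(s) = \Theta(\log n)$ directly and already gives the lower bound $E(Z) \ge E(s) = \Omega(\log n)$.

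For the upper bound it remains to show $E(t) = O(\log n)$. Here I would condition on the event $R = r$ and argue that, given there is a run of $r$ ones, with constant probability one of its immediate neighbors extends it to "$r$ ones in $r+2$ positions" (e.g. the pattern $1^r$ has a $1$ at distance $2$ on one side with probability $\ge 1/4$ independently-ish), so $t$ is $O(1)$ with constant probability, and more generally $\Pr[t > k \mid R = r]$ decays geometrically in $k$: having $t$ large means that near the longest run, and everywhere else in the string, the density of $1$s is locally depressed in a way that becomes exponentially unlikely. Alternatively, one can bound $E(t)$ unconditionally: $t > k$ forces that nowhere in $w$ do $R+1$ ones appear within $R+k$ consecutive positions, which (since windows of length $R + k$ have expected $(R+k)/2 \approx$ many ones once $k \gtrsim R$) is very unlikely; combined with $R = O(\log n)$ whp this gives $t = O(\log n)$ whp and, with a crude tail bound for the remaining probability mass (using $t \le n$ always), $E(t) = O(\log n)$.

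The main obstacle I anticipate is the clean handling of the dependence between $R(w)$ and $t(w)$ and getting the tail bounds on $t$ right: the event "the longest run has length exactly $r$" is negatively correlated with having dense windows elsewhere, so one must be careful to avoid circular conditioning. I would sidestep this by proving the two-sided concentration of $R$ first as a black box, then working on the (high-probability) event $\{c_1 \log n \le R \le c_2 \log n\}$ and bounding $\Pr[t > k]$ there via a union bound over all starting positions of the relevant window together with the observation that once $k$ exceeds a constant multiple of $\log n$, a random window of that length contains $R+1$ ones except with polynomially small probability; the contribution of the complementary low-probability events to $E(Z)$ is $O(n \cdot n^{-c}) = o(1)$ for suitable constants. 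Assembling $E(Z) = E(s) + E(t) = \Theta(\log n) + O(\log n) = \Theta(\log n)$ completes the proof.
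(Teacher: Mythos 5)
Your proposal is correct in substance and shares the paper's basic decomposition: you identify $s$ as the length of the longest $1$-run of $w$ (so $E(S)=\Theta(\log n)$ by the classical longest-run results, which already gives the lower bound), and you reduce everything to showing $E(T)=O(\log n)$. Where you diverge is in how $T$ is controlled. The paper's argument is a short deterministic observation: if $w$ has any $1$ outside its maximal $1$-run, then the substring consisting of that run, the intervening $0$s, and the nearest further $1$ witnesses $F(w,s+g+1)\ge s+1$, so $t\le g$ where $g$ is the length of a \emph{single} $0$-run of $w$; hence $t$ is bounded by the maximum $0$-run length, which has expectation $O(\log n)$, and the degenerate case (all $1$s in one run, so $t\le n$) contributes only $\binom{n+1}{2}+1$ words, i.e.\ $O(n^3/2^n)=o(1)$ to the expectation. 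This sidesteps entirely the conditioning-on-$R$ difficulty you worry about, because the bound $t\le g$ holds pointwise, with no conditioning. Your fallback route (Chernoff plus a union bound showing that some window of length $R+k+1$ contains $R+1$ ones once $k\gg\log n$, with the residual mass controlled by $t\le n$) is also workable and can be made rigorous along the lines you sketch, but it is heavier machinery for the same conclusion; your first idea (geometric decay of the gap next to the maximal run) is essentially the paper's argument, and the clean way to execute it is precisely to replace the conditional statement by the unconditional bound ``$t$ is at most the longest $0$-run.'' One small correction: the minimum window length containing $s+1$ ones is $s+t+1$, not $s+t$, though this is immaterial asymptotically.
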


\begin{proof} 
Write $w' = 1^{s'}0^{t'}\gamma'$ in the usual form, i.e.\ with $\gamma'\in 1\{0,1\}^* \cup \{\epsilon\}$, and consider the random variables $S'=s'$ and $T'=t'$. It is known that the expected maximum length of a run of 1s in a random word is $\Theta(\log n)$\cite{GO80}. Clearly, $S'$ equals the length of the longest run of $1$s of $w$, thus $Exp(S') = \Theta(\log n)$. To determine $E(T')$, consider a $1$-run of $w$ of maximum length $s'$. If $w$ has at least another occurrence of $1$, then there is a substring of $w$ consisting of the maximal $1$-run and one more $1$; the number of $0$'s in this substring is an upper bound on $t'$. Since these $0$s form a single $0$-run, their number is again $O(\log n)$ in expectation. If on the other hand, all occurrences of $1$ in $w$ are in the maximal run, then $w' = 1^{s'}0^{n-s'},$ so $t' = n-s' \leq n$. The number of words with at most one $1$-run is ${n+1 \choose 2}+1$. So we have: 
\begin{align*}
E(S' + T') &= \Theta(\log n) + \left(1 - \frac{\Theta(n^2)}{2^n}\right)O(\log n) + \frac{\Theta(n^2)}{2^n} n = \Theta(\log n) \qquad\qed
\end{align*}
\end{proof}

The expected value of the critical prefix length for prefix normal words is shown in Fig.~\ref{figSTGrowth} (right) for $n\leq 50$, on a loglinear scale. We conjecture that $E(S+T)$ is polylogarithmic for prefix normal words. The linear alignment of the data points together with lemma \ref{lemma:random_pnf_st} seems to support that.


\section{Conclusion and Open Problems}

We presented a new generating algorithm for prefix normal words, which produces all prefix normal words of length $n$ in amortized linear  time per word. Notice that the number of words that are {\em not} prefix normal also grows exponentially and greatly dominates prefix normal words (e.g., $\pnn(30)/2^{30}<0.05$), so the gain of any algorithm that runs in amortized time per word, over brute-force testing of all binary words, is considerable. 
We believe, moreover, that our algorithm actually runs in time $O(\polylog (n))$ per word. This could be proved by showing that the expected critical prefix length of a prefix normal word is polylogarithmic in $n$.

In Sec.~\ref{sec:combgenpnw} we gave a linear time testing algorithm for words which are derived from a word $w$ already known to be prefix normal, via a particular operation (swapping the last $1$ of the first 1-run with a $0$ in the first 0-run). This testing algorithm relies both on the knowledge that $w$ is prefix normal, and on the presence of a data structure for $w$ (the $F$-function). We pose as an open problem to find a strongly subquadratic time testing algorithm {\em for arbitrary words}. Another open problem is the computation of prefix normal forms. 
Solving this problem would lead immediately to an improvement for indexed binary jumbled pattern matching.

The observation that our language is a bubble language has opened up
completely new roads. An efficient implementation of the generating algorithm
led to new experimental results which were not available with our previous approach. The obtained
data led to new conjectures and results. We are confident that the connection
to bubble languages will also help in establishing theoretical results about
the number and structure of prefix normal words, and could hopefully lead to a strongly subquadratic testing algorithm.

%

\begin{small}

\end{small}


\section*{Appendix: Connection between prefix normal forms and binary jumbled pattern matching}

\newcommand{\fmax}{F}
\newcommand{\subword}{\sqsubseteq}
\def\pequiv{\sim_P}
\newcommand{\pmax}{\mathop{\rm pmax}}
\newcommand{\pmin}{\mathop{\rm pmin}}
\newcommand{\fmin}{f}
\newcommand{\M}{{\textrm M}}
\renewcommand{\epsilon}{\varepsilon}
\def\P{p}

The linear space solutions for binary pattern matching all rely on a simple property of binary strings, which we refer to as Interval Lemma (folklore)
: For a binary string $w$ and any fixed length $1\leq k \leq |w|$, if $w$ has two substrings of length $k$, with one containing $x$ $1$s, and the other $y$ $1$s, where $x<y$, then, for any $x\leq z \leq y$, $w$ also contains a substring of length $k$ with exactly $z$ $1$s. In other words, all Parikh vectors of substrings of the same length build an interval. The lemma implies that in order to be able to answer existence jumbled pattern matching queries, it suffices to store, for every length $k$, the maximum and minimum number of $1$s in any substring of length $k$:  When querying whether $w$ has a substring with Parikh vector $(x,y)$, we can simply ask whether $x$ lies between the maximum and minimum number of $1$s for length $x+y$. This list of minima and maxima for every length is the linear size index used. The big open question is how to compute it faster than the current $O(n^2/\log n)$ time. 

Now, prefix normal forms of a word $w$ can be used to compute this index. We know that two words have the same Parikh set (Parikh vectors of substrings) if and only if they have the same prefix normal forms both w.r.t.\ $1$ and to $0$ (see~\cite{FL11}, Thm.~2). 

In Fig. \ref{fig:esempio}, we present the word $w=10011011001001$ and its prefix normal forms in a standard representation for binary words: Draw in the Euclidean plane the word $w$ by representing each letter $1$ by an upper unit diagonal and each letter $0$ by a lower unit diagonal, starting from the origin $(0,0)$. The region between $\PNF_{1}(w)$ and $\PNF_{0}(w)$ forms exactly the Parikh set of $w$. For example, all substrings of length $6$ have one of the Parikh vectors $(4,2),(3,3),(2,4)$. 

\begin{figure}
\begin{minipage}{8cm}
\begin{center}
  \includegraphics[height=35mm]{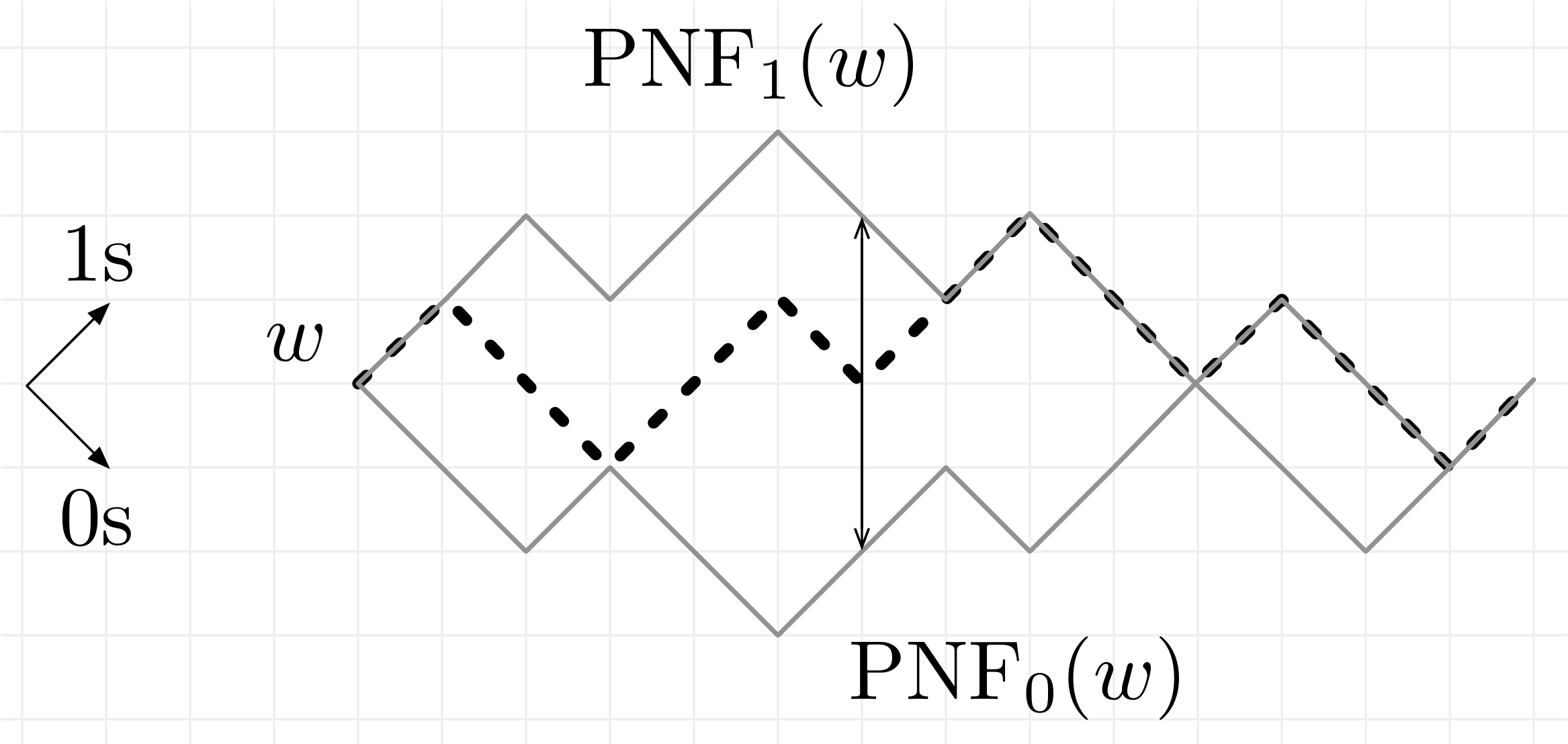}
\end{center}
\end{minipage}
\begin{minipage}{6cm}
\end{minipage}
\begin{minipage}{7cm}
\begin{center}
$
\begin{array}{rcl}
w & = 10011011001001\\
PNF_1(w) & = 11011001001001\\
PNF_0(w) & = 00100110110011
\end{array}
$
\end{center}
\end{minipage}
\caption{The word $w=10011011001001$ (dashed line) and its prefix normal forms (grey lines). The area between the two PNFs is the Parikh set of $w$. The vertical line shows all Parikh vectors of substrings of length $6$, namely $(4,2),(3,3),(2,4)$.}\label{fig:esempio}
\end{figure}

\end{document}